\theoremstyle{plain}
\newtheorem{theorem}{Theorem}
\newtheorem{lemma}[theorem]{Lemma}
\theoremstyle{definition}
\newtheorem{definition}{Definition}
\theoremstyle{remark}
\newtheorem{example}{Example}
\title{Systematic Translation of Formalizations of Type Theory from Intrinsic to Extrinsic Style}
\author{Florian Rabe \qquad\qquad Navid Roux
	\institute{University Erlangen-Nuremberg}
	\institute{Department of Computer Science\\
		University Erlangen-Nuremberg\\
		Erlangen, Germany}
	\email{florian.rabe@fau.de \qquad\qquad navid.roux@fau.de}
}
\begin{document}
\let\numberedparagraph=\paragraph
\renewcommand{\paragraph}[1]{\vspace{.3cm}\noindent\textbf{#1}\;}

\maketitle
\begin{abstract}
	Type theories can be formalized using the intrinsically (hard) or the extrinsically (soft) typed style. In large libraries of type theoretical features, often both styles are present, which can lead to code duplication and integration issues.

	We define an operator that systematically translates a hard-typed into the corresponding soft-typed formulation.
	Even though this translation is known in principle, a number of subtleties make it more difficult than naively expected.
	Importantly, our translation preserves modularity, i.e., it maps structured sets of hard-typed features to correspondingly structured soft-typed ones.

	We implement our operator in the \mmt system and apply it to a library of type-theoretical features.
\end{abstract}

\section{Introduction}\label{sec:intro}
\paragraph{Motivation and Related Work}
Soft type theory goes back to Curry's work \cite{curry}, where typing is a meta-language (ML) predicate $\of:\term\to\tp\to\prop$ between object-language (OL) terms and types.
This is also called extrinsic typing.
This leads to a deep embedding of typing where the OL judgment $t:A$ corresponds to the existence of a typing proof, i.e., an ML term witnessing $\of t\,A$.

Hard type theory goes back to Church's work \cite{churchtypes}, where typing is a function from terms to types.
If that function is sufficiently simple, formalizations in a dependently-typed ML like LF \cite{lf} may be able to capture it directly in the framework's type system.
Concretely, such a representation uses a shallow embedding of typing where an object-language typing judgment $t:A$ corresponds to the meta-language (ML) typing judgment $t:\tm A$.
Only well-formed terms can be encoded at, and the OL type of a term can be read off of its ML type.
This is also called intrinsic typing.
If ML type-checking is decidable, that means OL must be as well.

Soft typing is more expressive and flexible than hard typing.
But
\begin{compactitem}
	\item introducing an OL variable $x$ of type $a$ requires two ML variables of $x:\term, x^*:\of x\, a$,
	\item correspondingly, substituting a term for a variable requires the term and a proof of its typing judgment,
	\item type-checking is reduced to ML non-emptiness-checking, which is usually undecidable.
\end{compactitem}

In the LATIN project \cite{CHKMR:latinabs:11} going back to ideas developed in the Logosphere project \cite{logosphere}, we built a highly modular library of formalizations of logics and type theories.
The goal was to create a library of Little Logics (in the style of Little Theories \cite{littletheories}), each formalizing one language feature such as product types, which can be combined to form concrete systems.
This allows the reuse, translation, and combination of formalizations in the style of \cite{lfencodings,devgraphshiding}.
It helps the meta-theoretical analysis as each modular construction is itself a meta-theorem, e.g., reusing the formalization of a language feature implies that two languages share that feature, and translations between languages can allow moving theorems across formal systems  \cite{hol_nuprl,lfcut}.

Due to the incomparable advantages of soft and hard encodings, we had to formalize each feature in both styles.
This led not only to a duplication of code but also caused significant maintenance problems.
In particular, it is difficult to ensure coherent encoding styles (e.g., naming conventions, order of arguments, notations, etc.) in such a way that the two sets of encodings are related systematically.

For every pair of a hard- and a soft-typed encoding of the same language feature, there is a type erasure translation from the former to the latter with an associated type preservation property.
Such translations have been investigated in various forms, see e.g., \cite{paramreal_bernardy} for a systematic study in the form of realizability theories.
We will cast the type preservation as a logical relation proof as formalized in an LF-based logical framework in \cite{RS:logrels:12}.
Notably, given the hard-typed encoding, it is possible to derive the soft-typed one, the erasure translation, and the preservation proof automatically.
We call this derivation \emph{softening}.
Systematic softening not only greatly reduces the encoding effort but simplifies maintenance and produces more elegant code.

%

\paragraph{Contribution and Overview}
We define an operator \Soften in the logical framework LF~\cite{lf}.
Despite being conceptually straightforward, softening is a rather complex process, and an ad-hoc implementation, while possible, would be error-prone and hard to maintain.
Therefore, we employ a systematic approach for deriving the softening operator that constructs the logical relation proof along with the softened theory.
A particular subtlety was to ensure the generated code to still be human-readable. That required softening to consider pragmatic aspects like notations and choice of implicit arguments.

Our work is carried out under the LATIN2 header, which aims at a complete reimplementation of the LATIN library.
While LATIN worked with modular Twelf~\cite{RS:twelfmod:09}, LATIN2 uses the \mmt/LF incarnation of LF~\cite{rabe:howto:14}.
In addition to an implementation of LF and a module system, \mmt provides a framework for diagram operators \cite{RR:linearops:20}, which supports the meta-theory and implementation of operators that systematically derive formalizations from one another.
It also makes it easy to annotate declarations, which we will use to guide the softening operator in a few places.

Importantly, these diagram operators are functorial in the category of LF theories and theory morphisms.
That enables scaling them up to entire libraries in a way that preserves modularity.
That is important to derive human-readable formalizations.

Sect.~\ref{sec:mmt} introduces \mmt/LF.
Sect.~\ref{sec:logrel} shows the key definition of the softening operator, and Sect.~\ref{sec:diagrams} establishes meta-theoretical properties that allow lifting it to libraries.
Sect.~\ref{sec:impl} shortly sketches our implementation in the \mmt system.

\section{The MMT Framework and Basic Formalizations}\label{sec:mmt}
\mmt \cite{rabe:howto:14} is a framework for designing and implementing logical frameworks.
To simplify, we only use the implementation of LF that comes with \mmt's standard library, and restrict the grammar to the main features of \mmt/LF:
We assume the reader is familiar with LF (see e.g., \cite{lf}) and only recap the notions of theories and morphisms that \mmt adds on top.

\begin{commgrammar}
\gprod{\Delta}{\cdot}{diagrams}\\
\galtprod{\Delta,\;\ithy{T}{\Theta}}{theory definition}\\
\galtprod{\Delta,\;\imor{m}{S}{T}{\theta}}{morphism definition}\\
\gprod{\Theta}{\cdot}{declarations in a theory}\\
\galtprod{\Theta,\;c:A[=t]}{typed, optionally defined constants}\\
\galtprod{\Theta,\;\iincl{S}{}}{include of a theory}\\
\gprod{\theta}{\cdot \bnfalt \theta,\;c=t\bnfalt \theta,\iincl{m}{}}{declarations in a morphism}\\
\gprod{\Gamma}{\cdot \bnfalt \Gamma,x:A}{contexts}\\
\gprod{t,A,f}{c\bnfalt x \bnfalt \type\bnfalt \kind \bnfalt \tlam[A]{x}t \bnfalt \tPi[A]{x}B \bnfalt f\,t}{LF expressions}\\
\end{commgrammar}

\paragraph{Theories}
An \mmt/LF theory is ultimately a list of \textbf{constant} declarations $c:A[=t]$ where the definiens $t$ is optional.
A constant declaration may refer to any previously declared constant.
LF provides the primitives of a dependently typed $\lambda$-calculus, namely universes $\type$ and $\kind$, function types $\tPi[A]{x}B$, abstraction $\tlam[A]{x}t$ and application $f\,t$.
In a constant declaration $c:A$, we must have $A:\type$ or $A:\kind$, and in a variable binding $x:A$, we must have $A:\type$.
As usual, \mmt/LF allows writing $A\to B$ for $\tPi[A]{x}B$ and omitting inferable brackets, arguments, and types.
If we need to be precise about \textbf{typing}, we write $\oftype{T}{\Gamma}{t}{A}$ for the typing judgment between two expressions that may use all constants from theory $T$ and all variables from context $\Gamma$.

A theory $T$ may \textbf{include} a previously defined theory $S$,
which makes all constants of $S$ available in $T$ as if they were declared in $T$.

\begin{example}\label{ex:thys}
	We give theories formalizing hard- and soft-typed type theories.
	The left shows the common theory $\Proofs$ that formalizes proofs in standard LF fashion using the judgments-as-types principle:
	$\ded P$ is the type of proofs of the proposition $P:\prop$, i.e., $\ded\,P$ is non-empty iff $p$ is provable.
	\HTyped formalizes hard typing, also called intrinsic or Church typing, where typing is a function from terms to types, i.e., every term has a unique type that can be inferred from it.
	That enables the representation of object language terms $t:a$ as LF terms $t:\tm a$.
	And \STyped formalizes soft typing, also called extrinsic or Curry typing, where typing is a relation between terms and types, i.e., a term may have multiple or no types.
	That corresponds to a representation of an object language term $t:a$ in LF as an untyped term $t:\term$ for which a proof of $\ded\of t\,a$ exists.

\begin{figure}[h]
\vspace{-1em}
\begin{multicols}{3}
\begin{mmtmods}
\mthy{\Proofs}{}\\
\mcons{\prop}{\type}\\
\mcons{\dedN}{\prop\to\type}\\
\mend\\{}\\
\end{mmtmods}
\begin{mmtmods}
\mthy{\HTyped}{}\\
\mincl{\Proofs}{}\\
\mcons{\tp}{\type}\\
\mcons{\tm}{\tp\to\type}\\
\mend\\
\end{mmtmods}
\begin{mmtmods}
\mthy{\STyped}{}\\
\mincl{\Proofs}{}\\
\mcons{\tp}{\type}\\
\mcons{\term}{\type}\\
\mcons{\ofN}{\term\to\tp\to\prop}\\
\mend
\end{mmtmods}
\end{multicols}
\end{figure}
\end{example}

\paragraph{Morphisms}
A morphism $m:S\to T$ represents a compositional translation of all $S$-syntax to $T$-syntax.
We spell out the definition and key property:

\begin{definition}\label{def:mor}
A morphism $m:S\to T$ is a mapping of $S$-constants to $T$-expressions such that for all $S$-constants $c:A$ we have $\oftype{T}{}{m(c)}{\ov{m}(A)}$
where $\ov{m}$ maps $S$-syntax to $T$-syntax as defined in Fig.~\ref{fig:mor}.
In the sequel, we write $m$ for $\ov{m}$.
\newcommand{\m}{\ov{m}}

\begin{figure}[hbt]
\begin{multicols}{2}
\noindent\[\eqns{
\txt{constants of $S$}\\
\m(c) & m(c)\\[.2cm]
\txt{other expressions}\\
\m(x) & x\\
\m(\type) & \type\\
\m(\tPi[A]{x}B) & \tPi[\m(A)]{x}\m(B)\\
\m(\tlam[A]{x}t)& \tlam[\m(A)]{x}\m(t)\\
\m(f\,t) & \m(f)\,\m(t)\\[.2cm]
\txt{contexts}\\
\m(\cdot) &\cdot\\
\m(\Gamma,x:A) &  \m(\Gamma), x:\m(A)
}\]
\[\eqns{
\txt{theories that include $S$}\\
\m(E=\{\ldots,D_i,\ldots\})&E^m=\{\ldots,\m(D_i),\ldots\}\\
\m(\incl{S})&\incl{T}\\ 
\m(c:A[=t])&c:\m(A)[=\m(t)]\\
\m(\incl{E})&\incl{E^m}\\[.2cm]
\txt{constants of a theory including $S$}\\
\m(c)&c\\
\txt{}
}\]
where $E^m$ generates a fresh name for the translated theory
\end{multicols}
\caption{Map induced by a Morphism}\label{fig:mor}
\end{figure}
\end{definition}

\begin{theorem}\label{thm:mor}
For a morphism $m:S\to T$ and a theory $E$ that includes $S$, if $\oftype{E}{\Gamma}{t}{A}$, then $\oftype{E^m}{m(\Gamma)}{m(t)}{m(A)}$.
In particular for $E = S$, we have $\oftype{T}{m(\Gamma)}{m(t)}{m(A)}$.
\end{theorem}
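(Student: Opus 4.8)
The plan is to prove the statement by induction on the derivation of the LF typing judgment $\oftype{E}{\Gamma}{t}{A}$. One preliminary is needed: the conclusion presupposes that $E^m$ is a well-formed theory, so I would first show, by a parallel induction on the list of declarations of $E$, that $E^m$ is well-formed whenever $E$ is. Concretely, the clauses of Fig.~\ref{fig:mor} send a declaration $c:A[=t]$ of $E$ to $c:\ov{m}(A)[=\ov{m}(t)]$ and an include $\incl{S}$ to $\incl{T}$; well-formedness of the $n$-th translated declaration requires $\oftype{E_{<n}^m}{}{\ov{m}(A)}{\type}$ (or $\kind$) for the prefix $E_{<n}$, which is exactly an instance of the main induction applied to a shorter theory. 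So the two inductions interleave, and I would organize them as a single induction on the pair (length of $E$, then size of the typing derivation).

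For the base cases: if $t=x$ is a variable with $x:A$ in $\Gamma$, then $x:\ov{m}(A)$ is in $\ov{m}(\Gamma)$ by the context clause of Fig.~\ref{fig:mor}. If $t=c$ is a constant, there are two sub-cases matching Fig.~\ref{fig:mor}: when $c$ is an $S$-constant (available in $E$ through the include of $S$), $\ov{m}(c)=m(c)$ and $\oftype{T}{}{m(c)}{\ov{m}(A)}$ holds by Definition~\ref{def:mor}, which weakens along the include $\incl{T}\subseteq E^m$; when $c$ is declared in $E$ itself (or via an include other than $S$) but is not an $S$-constant, $\ov{m}(c)=c$ and $c:\ov{m}(A)$ is literally a declaration of $E^m$. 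The remaining rules --- $\type:\kind$, $\Pi$-formation, $\lambda$-abstraction, and application --- follow directly from the induction hypotheses together with the fact that $\ov{m}$ commutes with every term former by construction; here I tacitly extend $\ov{m}$ by $\ov{m}(\kind)=\kind$ so that the $\type:\kind$ case goes through.

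The case requiring actual work, and the one I expect to be the main obstacle, is the conversion rule: from $\oftype{E}{\Gamma}{t}{A}$ and $A\equiv A'$ conclude $\oftype{E}{\Gamma}{t}{A'}$. For this I would first prove a \emph{substitution lemma}, $\ov{m}(u[x/s])=\ov{m}(u)[x/\ov{m}(s)]$, by an easy induction on $u$ using that $\ov{m}$ fixes variables and is a homomorphism on the term formers. Using it, I would prove that $\ov{m}$ \emph{preserves definitional equality}: $u\equiv u'$ implies $\ov{m}(u)\equiv\ov{m}(u')$. For $\beta\eta$-steps this is the substitution lemma plus congruence of $\equiv$; for $\delta$-steps (unfolding a defined constant $c$ with definiens $v$) it holds because $\ov{m}$ maps the declaration $c:A=v$ to $c:\ov{m}(A)=\ov{m}(v)$, so $c$ unfolds to $\ov{m}(v)$ in $E^m$, which is $\ov{m}$ applied to the unfolding of $c$ in $E$. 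With these lemmas the conversion case closes: the induction hypothesis gives $\oftype{E^m}{\ov{m}(\Gamma)}{\ov{m}(t)}{\ov{m}(A)}$, and $\ov{m}(A)\equiv\ov{m}(A')$ allows one application of conversion in $E^m$.

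Finally the special case $E=S$ drops out: then $E$ declares no constants beyond those of $S$, every constant encountered is an $S$-constant, $\ov{m}$ maps $S$-syntax straight into $T$-syntax, and $E^m=T$, so the general conclusion reads $\oftype{T}{m(\Gamma)}{m(t)}{m(A)}$.
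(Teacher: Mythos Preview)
Your proof is correct and is the standard argument for type preservation along LF theory morphisms: induction on typing derivations, with the usual substitution lemma feeding the preservation of definitional equality, which in turn handles the conversion rule. The interleaving with well-formedness of $E^m$ is also right.

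However, there is nothing to compare against: the paper states this theorem without proof. It is treated as a basic, well-known property of LF morphisms (going back to the original LF literature and standard in the \mmt/LF setting the paper builds on), and the text moves directly to the categorical reading in terms of pushout functors. So your write-up is not a different route from the paper's---it is simply a spelled-out proof where the paper gives none. If anything, it is more detailed than what would typically appear even in a paper that did include a proof sketch; the substitution lemma and the $\delta$-unfolding case are exactly the points one would highlight.
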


In terms of category theory, a morphism $m$ induces a \textbf{pushout functor} $\PO{m}$ from the category of theories including $S$ to the category of theories including $T$.
As a functor, $m$ extends to diagrams, i.e., any diagram of theories $E$ including $S$ and morphisms between them is mapped to a corresponding diagram of theories $E^m$ including $T$.
Moreover, for each $E$, $m$ extends to a morphism $E\to E^m$ that maps every $S$-constant according to $m$ and every other constant to itself.
Each of these morphisms maps $E$-contexts/expressions to $E^m$ and that mapping preserves all judgments.
These morphisms form a natural transformation, and we speak of a \textbf{natural functor}.

\begin{figure}[hbt]
\begin{multicols}{2}
\begin{mmtmods*}
\mmor{\TE}{\HTyped}{\STyped}\\
\mincl{\Proofs}{}\\
\mconsd{\tp}{}{\tp}\\
\mconsd{\tm}{}{\tlam[\tp]{a}\term}\\
\mend\\
\end{mmtmods*}{}\\{}\\
\begin{mmtmods*}
	\mthy{\HProd}{}\\
	\mincl{\HTyped}{}\\
	\mcons{\prod}{\tp\to\tp\to\tp}\\
	\mcons{\pair}{\tPi{a,b}\tm a\to\tm b\to\tm \prod\,a\,b}\\
	\mcons{\projL}{\tPi{a,b}\tm \prod\,a\,b\to\tm a}\\
	\mcons{\projR}{\tPi{a,b}\tm \prod\,a\,b\to\tm b}\\
	\mend\\
\end{mmtmods*}
\begin{mmtmods*}
\mthy{\HProd^\TE}{}\\
\mincl{\STyped}{}\\
\mcons{\prod}{\tp\to\tp\to\tp}\\
\mcons{\pair}{\tPi{a,b}\term\to\term\to\term}\\
\mcons{\projL}{\tPi{a,b}\term\to\term}\\
\mcons{\projR}{\tPi{a,b}\term\to\term}\\
\mend
\end{mmtmods*}
\begin{mmtmods*}
\mmor{\TE_\HProd}{\HProd}{\HProd^\TE}\\
\mincl{\TE}{}\\
\mconsd{\prod}{}{\prod}\\
\mconsd{\pair}{}{\pair}\\
\mconsd{\projL}{}{\projL}\\
\mconsd{\projR}{}{\projR}\\
\mend\\
\end{mmtmods*}
\end{multicols}
\caption{\label{fig:te-pushout}Pushout along the Type Erasure Morphism}
\end{figure}

\begin{example}[related to Fig.~\ref{fig:te-pushout}]\label{ex:te}
The type erasure translation $\TE:\HTyped\to\STyped$ maps types $a:\tp$ to types $\TE(a):\tp$, which we formalize by $\tp=\tp$.
And it maps typed terms $t:\tm a$ to untyped terms $\TE(t):\term$, which we formalize by $\tm=\tlam[\tp]{a}\term$ and thus $\TE(\tm\,a)=\term$.
We also use $\iincl{\Proofs}$ to include the identity morphism on \Proofs, i.e., all constants of \Proofs are mapped to themselves.

Applying this morphism, i.e., the pushout functor $\PO{\TE}$, to the theory \HProd of hard-typed simple products yields the theory $\HProd^\TE$, which arises by replacing every occurrence of $\tm A$ with $\term$.
\TE also extends to the morphism $\TE_\HProd$, which translates all expressions of \HProd to expressions of $\HProd^\TE$.
This translations preserves LF-typing, e.g., if $\oftype{\HTyped}{}{t}{\tm \prod\,a\,b}$, then $\oftype{\HTyped^\TE}{}{\TE_\HProd(t)}{\term}$.

However, $\HProd^\TE$ is not the desired formalization of soft-typed products (e.g., because it lacks constants relating types and terms), and we develop a more suitable functor in the next section.
\end{example}

\section{The Softening Operator}\label{sec:logrel}

\subsection{Basic Overview}\label{sec:basicsoften}

\Soften translates theories based on \HTyped to theories based on \STyped.
The key idea is that whenever we have an expression $t : \tm a$ in $\HTyped$, then in $\STyped$ we need to synthesize two things: an expression $\TE(t) : \term$ and an expression $t^* : \of \TE(t)\,\TE(a)$ acting as a witness of type preservation.
And whenever we have an expression $a : \tp$, we need to synthesize one thing only, namely $\TE(a) : \tp$.
(Note that for simple types such as product and function types, we have $\TE(a) = a$. We discuss dependent function types in the next section.)
Both intuitions extend homomorphically to all concepts of LF such as function types and contexts.

As an example, consider the constants $\pair,\projL,\projR$ in $\HProd$ in Fig.~\ref{fig:prod}.
For each of them we synthesize a type-erased constant of the same name and a starred typing witness in $\STyped$.
Note that the type parameters $a$ and $b$ are removed in their corresponding type-erased constant in $\SProd$.
We have been unable to find a systematic way to determine when arguments need to be removed and discuss this problem in Sec.~\ref{sec:remove-params}.
For the arguments in the starred constants such as $\pair^*$, we synthesize two parameters $x : \term$ and $x^* : \ded \of x\,a$ (whose name can often be omitted).

\begin{figure}[hbt]
\begin{multicols}{2}
	\begin{mmtmods}
		\mthy{\HProd}{}\\
		\mincl{\HTyped}{}\\
		\mcons{\prod}{\tp\to\tp\to\tp}\\
		\mcons{\pair}{\tPi{a,b}\tm a\to\tm b\to\tm \prod\,a\,b}\\
		\mcons{\projL}{\tPi{a,b}\tm \prod\,a\,b\to\tm a}\\
		\mcons{\projR}{\tPi{a,b}\tm \prod\,a\,b\to\tm b}\\
		\mend\\{}\\{}\\{}\\
	\end{mmtmods}
	\begin{mmtmods}
		\mthy{\SProd}{}\\
		\mincl{\STyped}{}\\
		\mcons{\prod}{\tp\to\tp\to\tp}\\
		\mcons{\pair}{\term\to\term\to\term}\\
		\mcons{\pair^*}{\tPi{a,b}\tPi{x}\ded \of x\,a\to\tPi{y}\ded \of y\,b \mnl \to\ded\of(\pair\,x\,y)\,(\prod\,a\,b)}\\
		\mcons{\projL}{\term\to\term}\\
		\mcons{\projL^*}{\tPi{a,b}\tPi{x}\ded \of x\,(\prod\,a\,b)\mnl\to\ded\of(\projL\,x)\,a}\\
		\mcons{\projR}{\term\to\term}\\
		\mcons{\projR^*}{\tPi{a,b}\tPi{x}\ded \of x\,(\prod\,a\,b)\mnl\to\ded\of(\projR\,x)\,b}\\
		\mend
	\end{mmtmods}
\end{multicols}
\caption{Hard and Soft Product Types}\label{fig:prod}
\end{figure}

\subsection{Logical Relations}\label{sec:soften:logrel}

We capture the type preservation proof using logical relations.
The meta-theory for using logical relations to represent type preservation was already sketched in~\cite{RS:logrels:12}, but we have to make a substantial generalization to \emph{partial} logical relations and extend those to natural functors.
Besides allowing the representation of partial translations, this has an important practical advantage: the translations in~\cite{RS:logrels:12} must introduce unit argument types in places where no particular property about a term is proved.
While semantically irrelevant, softening must remove these in order to produce the softened theories actually expected by humans, thus potentially violating the correctness of the translation.
Partiality allows constructing the softened theories in a way that these artefacts are not introduced in the first place.

Because logical relations can be very difficult to wrap one's head around, we focus on the special case needed for softening although we have designed and implemented it for the much more general setting of~\cite{RS:logrels:12}.
Moreover, we advise readers to maintain the following intuitions while perusing the formal treatment below:
\begin{compactitem}
	\item The morphism $m:S\to T$ is the type erasure translation $\TE:\HTyped\to\STyped$.
	\item \label{def:tp}The logical relation $r$ is a mapping $\TP$ from \HTyped-syntax to \STyped-syntax that maps
	\begin{compactitem}
		\item types $A:\type$ to unary predicates $\TP(A):\TE(A)\to\type$ about \TE-translated terms of type $A$
		\item terms $t:A:\type$ to proofs $\TP(t):\TP(A)\,\TE(t)$ of the predicate associated with $A$
	\end{compactitem}
	\item Even more concretely,
	\begin{compactitem}
		\item $\TP(\prop),\TP(\ded),\TP(\tp)$ are all undefined because we need not prove anything about terms at those types
		\item $\TP(\tmN)=\tlam[\tp]{a}\tlam[\term]{x}\of x\, a$ and thus $\TP(\tm a)=\tlam[\term]{x}\of x\, a$, i.e., $\TP$ maps every $t:\tm a$ to its typing proof $\TP(t):\of\,\TE(t)\,a$.
	\end{compactitem}
\end{compactitem}
Moreover, it may help readers to compare Def.~\ref{def:mor} and~\ref{def:rel} as well as Thm.~\ref{thm:mor} and~\ref{thm:rel}.

\begin{definition}\label{def:rel}
	\renewcommand{\r}{\ov{r}}
	A partial \textbf{logical relation} on a morphism $m:S\to T$ is a partial mapping $r$ of $S$-constants to $T$-expressions such that
	for every $S$-constant $c:A$, if $r(c)$ is defined, then so is $\r(A)$ and $\oftype{T}{}{r(c)}{\r(A)\,m(c)}$.
	$r$ is called \textbf{term-total} if it is defined for a typed constant if it is for the type.
	The partial mapping $\r$ of $S$-syntax to $T$-syntax is defined in Fig.~\ref{fig:rel}.
	In the sequel, we write $r$ for $\r$.
	\begin{figure}
		\[\eqns{
			\r(c) & r(c)\\[.2cm]
			\r(x) & \cas{x^* \mifc \text{$x^*$ was declared when traversing into the binder of $x$} \\ \text{undefined} \mothw}  \\
			\r(\type) & \tlam[\type]{a}a\to\type \\
			\r(\tPi{x:A}B) & \tlam[m(\tPi{x:A}B)]{f}\tPi{\r(x:A)}\r(B)\,(f\,x)\\
			\r(\tlam{x:A}t) & \tlam{\r(x:A)}\r(t)\\
			\r(f\,t) & \cas{\r(f)\,m(t)\,\r(t) \mifc \text{$\r(t)$ defined} \\ \r(f)\,m(t) \mothw}\\[.2cm]
			\r(\cdot) & \cdot\\
			\r(\Gamma,x:A) & \r(\Gamma), \cas{x:m(A),\,x^*:\r(A)\,x \mifc \text{$\r(A)$ defined} \\
				x:m(A) \mothw} \\
		}\]
		
		$\ov{r}(-)$ is undefined whenever an expression on the right-hand side is.
		\caption{Map induced by a Logical Relation}\label{fig:rel}
	\end{figure}
\end{definition}


The key idea of the map $\ov{r}$ is to attempt to construct $\ov{r}$ in the same way as in \cite{RS:logrels:12} for total $r$.
Whenever $\ov{r}$ is applied to an argument for which it is not defined, the expression is simply removed: if the type of a bound variable would be undefined, the whole binding is removed; if an argument of a function application would be undefined, the function is applied to one fewer argument.
The next theorem states that these removals fit together in the sense that $\ov{r}$ still satisfies the main property of logical relations whenever it is defined:

\begin{theorem}\label{thm:rel}
	For a partial logical relation $r$ on a morphism $m:S\to T$, we have
	\begin{compactitem}
		\item if $\oftype{S}{\Gamma}{t}{A}$ and $r$ is defined for $t$, then $r$ is defined for $A$ and $\oftype{T}{r(\Gamma)}{r(t)}{r(A)\,m(t)}$
		\item if $r$ is term-total, it is defined for a typed term if it is for its type
	\end{compactitem}
\end{theorem}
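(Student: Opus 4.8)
The plan is to prove both bullets simultaneously by induction on the derivation of $\oftype{S}{\Gamma}{t}{A}$, closely paralleling the proof of Theorem~\ref{thm:mor} and invoking that theorem wherever a bare $m(\cdot)$-subterm has to be type-checked, e.g.\ $m(t):m(A)$ so that $r(A)\,m(t)$ is a well-formed type, or $f\,x:m(B)$ in the $\Pi$- and $\lambda$-cases. Delegating to the total case of~\cite{RS:logrels:12} is not possible, since partiality is precisely the new ingredient. The two bullets have to travel together rather than one after the other: in the application rule, whether the starred argument of $\ov{r}(f)$ is kept or dropped is decided by the ``$\ov{r}$ defined on $t$ implies $\ov{r}$ defined on $A$'' half of the induction hypothesis applied to the argument, so the first bullet is consumed inside its own induction and also in the typing part.

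First I would establish a substitution lemma for the induced map: for $\oftype{S}{\Gamma}{s}{C}$ and $B$ well-typed over $\Gamma,x:C$, if $\ov{r}(s)$ is defined then $\ov{r}([s/x]B)$ is defined iff $\ov{r}(B)$ is, and equals $\ov{r}(B)$ with $m(s)$ and $\ov{r}(s)$ substituted simultaneously for $x$ and $x^*$; whereas if $\ov{r}(s)$ is undefined then $x^*$ cannot occur in $\ov{r}(B)$ --- which, by the first bullet applied to $s$, means $\ov{r}(C)$ is undefined --- and then $\ov{r}([s/x]B)=[m(s)/x]\ov{r}(B)$. This is an induction on $B$ whose two-case shape mirrors the two-case clauses defining $\ov{r}$ on $\Pi$, $\lambda$, application, and context extension in Fig.~\ref{fig:rel}; a corollary is that $\ov{r}$ respects $\beta\eta$-conversion and definedness across convertible expressions, which is what the conversion rule of LF needs in the main induction.

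The cases of the main induction are then largely mechanical. For a variable, the construction of $\ov{r}(\Gamma)$ places $x^*:\ov{r}(A)\,x$ into the context exactly when $\ov{r}(A)$ is defined, which is also exactly when $\ov{r}(x)=x^*$ is defined. For a constant, the first bullet is the defining condition of Def.~\ref{def:rel} and the second is the term-totality hypothesis. The universe case is the displayed $\ov{r}(\type)$ clause read off directly. In the $\Pi$- and $\lambda$-cases, definedness of the conclusion reduces to definedness of the body (deleting an $x^*$-binder never changes definedness), and the typing follows by the induction hypothesis over $\ov{r}(\Gamma,x:C)$ after $\beta$-reducing the right-hand side as it appears in Fig.~\ref{fig:rel}. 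For an application $f\,s$ with $f:\tPi{x:C}B$ one splits on whether $\ov{r}(s)$ is defined: if it is, the hypothesis on $s$ forces the $x^*$-binder to be present in $\ov{r}$ of $f$'s $\Pi$-type, so $\ov{r}(f)\,m(s)\,\ov{r}(s)$ type-checks; if it is not, $\ov{r}(f\,s)=\ov{r}(f)\,m(s)$ is matched against an $x^*$-free $\Pi$-type; in both sub-cases the substitution lemma identifies the resulting type with $\ov{r}([s/x]B)\,m(f\,s)$.

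The step I expect to be the main obstacle is exactly this partiality bookkeeping in the application and binder cases: one must check that every place where $\ov{r}$ silently deletes a binder or an argument is a place where the corresponding component is equally absent from the type it is matched against, so that the two sides of $r(t):r(A)\,m(t)$ remain aligned. The content that is not routine is that this alignment is forced and not ad hoc --- it follows from the requirement of Def.~\ref{def:rel} that $r(c)$ inhabit the \emph{full} type $\ov{r}(A)\,m(c)$, with $\ov{r}(A)$ computed in full including all deletions, together with the substitution lemma --- and one has to trace how these constraints propagate through applications so that an argument once dropped is never demanded back by the ambient type. Making this precise is the content of the ``removals fit together'' remark preceding the theorem.
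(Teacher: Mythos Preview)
The paper's proof is externalized via \verb|\input{proofrel}| and is not contained in the supplied source, so a line-by-line comparison is impossible. I therefore evaluate your proposal on its own terms and point out where it does not go through.

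Your overall architecture---simultaneous induction on the typing derivation, a substitution lemma for $\ov r$ that tracks both $x$ and $x^*$, and a conversion-compatibility corollary---is the right shape for this result and matches the standard pattern for logical relations in LF. The gap is in the application case and, correspondingly, in the ``$\ov r(s)$ undefined'' branch of your substitution lemma.

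You write that when $\ov r(s)$ is undefined, ``$\ov r(f\,s)=\ov r(f)\,m(s)$ is matched against an $x^*$-free $\Pi$-type'', and earlier that ``if $\ov r(s)$ is undefined then $x^*$ cannot occur in $\ov r(B)$ --- which, by the first bullet applied to $s$, means $\ov r(C)$ is undefined''. But the first bullet gives the implication in the \emph{other} direction: $\ov r(s)$ defined $\Rightarrow$ $\ov r(C)$ defined. What you actually need is its converse, i.e.\ $\ov r(C)$ defined $\Rightarrow$ $\ov r(s)$ defined, and that is precisely term-totality (the second bullet), not the first. Without term-totality the situation $\ov r(C)$ defined but $\ov r(s)$ undefined is perfectly possible (take any constant $s$ with $r(s)$ left undefined while $r$ is defined on its type); then $\ov r(f)$ has type $\tPi{x:m(C)}\tPi{x^*:\ov r(C)\,x}\ov r(B)\,(m(f)\,x)$, so $\ov r(f)\,m(s)$ still expects an $x^*$-argument and is not of type $\ov r([s/x]B)\,m(f\,s)$. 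Your ``removals fit together'' paragraph asserts the alignment but does not supply the missing hypothesis.

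Concretely: either your induction must carry term-totality as a standing assumption (so that in the application case you may invoke the inductive instance of the second bullet on $s:C$ to force $\ov r(C)$ undefined whenever $\ov r(s)$ is), or the first bullet must be read under that assumption. Your write-up silently does the former while attributing it to the latter; fixing the attribution and making the dependence of the first bullet's application case on term-totality explicit is what is needed to close the argument.
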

\begin{proof}
The inductive definition is the same as in \cite{RS:logrels:12} except for the possibility of undefinedness.
Thus, whenever the results are defined, the typing properties follow from the theorems there.

First, it is straightforward to see that $r$ is total on contexts and substitutions because the case distinctions explicitly avoid recursing into arguments for which $r$ is undefined.

Second, we show by induction on derivations of $\oftype{S}{\Gamma}{t}{A}$ that if $A:\type$ then $r$ is defined for $t$ iff it is defined for $A$.
\begin{compactitem}
\item constant $c:A$: True by assumption.
\item variable $x:A$: The case for $\Gamma,x:A$ introduces the variable $x^*$ into the target context if $r(A)$ is defined.
 The case for $x$ picks up on that and (un)defines $r$ at $x$ accordingly.
\item $\lambda$-abstraction $\tlam{x:A}t:\tPi{x:A}B$: $r$ is always defined for $x:A$.
 By induction hypothesis, it is defined for $t$ if it is for $B$.
\item $t$ cannot be a $\Pi$-abstraction
\item application $f\,t:B(t)$ for some $f:\tPi{x:A}B(x)$:
  By definition, $r$ is defined for $f\,t$ if it is defined for $f$.
  By induction hypothesis the latter holds iff $r$ is defined for $\tPi{x:A}B(x)$, which by definition holds iff it is defined for $B(x)$.
  It remains to show that $r$ is defined for $B(t)$ iff it is defined for $B(x)$ in the context extended with $x:A$.
  By induction hypothesis, $r$ is defined for $t$ iff it is defined for $x$.
  Therefore, and because the definition of $r$ is compositional, substituting $t$ for $x$ cannot affect whether $r$ is defined for an expression.
\end{compactitem}

Finally, if $\oftype{E}{\Gamma}{t}{A}$ for $A:\kind$, we need to show that $r$ is defined for $A$ if it is for $t$.
That is trivial: inspecting the definition shows that $r$ is always defined for kinds anyway.
\end{proof}

\medskip

We now capture how to synthesize $\SProd$ from $\HProd$ via logical relations.
First we define the type erasure morphism $\TE: \HTyped \to \STyped$ and the type preservation property as a logical relation $\TP$ on $\TE$.
Then we define and apply the following functor on $\HProd$, which can be thought of as an analog of pushout along a logical relation:
\begin{definition}\label{def:lrop}
	Consider a morphism $m:S\to T$ and a term-total logical relation $r$ on $m$.
	Then the functor $\LR{m}{r}$ from $S$-extensions to $T$-extensions maps theories $E$ as follows:
	\begin{compactenum}
		\item We compute the pushout $E^m:=\PO{m}(E)$.
		\item $E^m$ has the same shape as $E$ and there is a morphism $m_E:E\to E^m$.
		For each, we create an initially empty logical relation $r_E$ on $m_E$.
		\item For each declaration $c:A[=t]$ in $E$ for which $r_E(A)$ is defined, we add
		\begin{compactenum}
			\item the constant declaration $c^*:r_E(A)\,m_E(c)[=r_E(t)]$ to $E^m$ 
			\item the case $r(c)=c^*$ to $r_E$.
		\end{compactenum}
	\end{compactenum}
\end{definition}

Concretely we get:

\begin{mmtmods}
	\mthy{\LR{\TE}{\TP}(\HProd)}{}\\
	\mincl{\STyped}{}\\
	\mcons{\prod}{\tp\to\tp\to\tp}\\
	\mcons{\pair}{\tPi{a,b}\term\to\term\to\term}\\
	\mcons{\pair^*}{\tPi{a,b}\tPi{x}\ded \of x\,a\to\tPi{y}\ded \of y\,b \mnl \to\ded\of(\pair\,a\,b\,x\,y)\,(\prod\,a\,b)}\\
	\mcons{\projL}{\tPi{a,b}\term\to\term}\\
	\mcons{\projL^*}{\tPi{a,b}\tPi{x}\ded \of x\,(\prod\,a\,b) \to\ded\of(\projL\,a\,b\,x)\,a}\\
	\mcons{\projR}{\tPi{a,b}\term\to\term}\\
	\mcons{\projR^*}{\tPi{a,b}\tPi{x}\ded \of x\,(\prod\,a\,b) \to\ded\of(\projR\,a\,b\,x)\,b}\\
	\mend
\end{mmtmods}

Here we see that $\pair$, $\projL$, $\projR$ all take undesired (and unused) type arguments.
In the sequel, we will suitably extend the naive definition given above.

Fig.~\ref{fig:input-examples} gives some additional examples of hard-typed features.
Here we also include hard-typed equality \HEqual to formulate the reduction rules for function types.
Fig.~\ref{fig:output-examples} shows the corresponding soft-typed variants that we intend to obtain.
Note that these examples already foreshadow that \Soften can be extended to theories containing includes in a straightforward way.
We will define that formally in Sect.~\ref{sec:diagrams}.

\begin{figure}[hbt]
{
	\setlength{\columnsep}{-1em}
	\begin{multicols}{2}
	\setlength{\parindent}{0pt}
	\begin{mmtmods*}
		\mthy{\HEqual}{}\\
		\mincl{\HTyped}{}\\
		\keepParamAnnotation{\eq}{1}\\
		\mcons{\eq}{\tPi{a}\tm a\to\tm a\to\prop}\\
		\mcons{\refl}{\tPi{a,x}\ded \eq\,a\,x\,x}\\
		\mcons{\eqsub}{\tPi{a,x,y}\ded\eq\, a\,x\,y\to\mnl \tPi[\tm a\to\prop]{F}\ded F\,x\to\ded F\,y}\\
		\mend
	\end{mmtmods*}
	\begin{mmtmods*}
		\mthy{\HSimpFun}{}\\
		\mincl{\HEqual}{}\\
		\mcons{\fun}{\tp\to\tp\to\tp}\\
		\keepParamAnnotation{\lam}{1}\\
		\mcons{\lam}{\tPi{a,b}(\tm a \to \tm b)\to\tm \fun\, a\, b}\\
		\mcons{\app}{\tPi{a,b}\tm \fun\,a\,b \to \tm a\to \tm b}\\
		\mend
	\end{mmtmods*}	
	\begin{mmtmods*}
		\mthy{\HDepFun}{}\\
		\mincl{\HEqual}{}\\
		\keepParamAnnotation{\dfun}{1}\\
		\mcons{\dfun}{\tPi{a}(\tm a\to\tp)\to\tp}\\
		\keepParamAnnotation{\dlam}{1}\\
		\mcons{\dlam}{\tPi{a}\tPi[\tm a\to\tp]{b}(\tPi[\tm a]{x}\tm b\,x)\mnl\to\tm \dfun\, a\, b}\\
		\mcons{\dapp}{\tPi{a,b}\tm \dfun\,a\,b \to \tPi[\tm a]{x} \tm b\,x}\\
		\mend
	\end{mmtmods*}
	\end{multicols}
	\noindent
	\begin{multicols}{2}
	\setlength{\parindent}{0pt}
	\begin{mmtmods*}
		\mthy{\HBeta}{}\\
		\mincl{\HSimpFun}{}\\
		\mcons{\reduce}{%
			\tPi{a,b}\tPi[\tm a\to\tm b]{F}\tPi{x}\mnl\ded\eq\,b\,(\app\,a\,b\,(\lam\,a\,b\,F)\,x)\,(F\,x)%
		}\\
		\mend
	\end{mmtmods*}
	\begin{mmtmods*}
		\mthy{\HEta}{}\\
		\mincl{\HSimpFun}{}\\
		\mcons{\repr}{\tPi{a,b}\tPi[\tm \fun\,a\,b]{f}\mnl \ded \eq\,(\fun\,a\,b)\,f\,(\lam\,a\,b\,\tlam{x}\app\,f\,x)}\\
		\mend
	\end{mmtmods*}
	\begin{mmtmods*}
		\mthy{\HExten}{}\\
		\mincl{\HSimpFun}{}\\
		\mcons{\exten}{%
			\tPi{a,b}\tPi[\tm \fun\,a\,b]{f,g}\mnl%
			(\tPi{x}\ded \eq\,b\,(\app\,a\,b\,f\,x)\mnl%
			(\app\,a\,b\,g\,x)) \to \ded \eq\,(\fun\,a\,b)\,f\,g%
		}\\
		\mend
	\end{mmtmods*}
	\begin{mmtmods*}
		\mthy{\HDepBeta}{}\\
		\mincl{\HDepFun}{}\\
		\mcons{\dreduce}{\tPi{a,b}\tPi[{\tPi[\tm a]{x}\tm b\,x}]{F}\tPi{x} \mnl \ded \eq\,(b\,x)\,(\dapp\,a\,b\,(\dlam\,a\,b\,F)\, x)\,(F\,x)}\\
		\mend
	\end{mmtmods*}
	\end{multicols}
}
	\caption{\label{fig:input-examples}Theories for Function Types with Annotations for Needed Arguments}
\end{figure}

\begin{figure}[hbt]
{
	\setlength{\columnsep}{-4.1em}
	\begin{multicols}{2}
		\setlength{\parindent}{0pt}
		\begin{mmtmods*}
			\mthy{\SEqual}{}\\
			\mincl{\STyped}{}\\
			\mcons{\eq}{\tPi{a}\term\to\term\to\prop}\\
			\mcons{\refl^*}{\tPi{a,x}\ded \of x\,a \to \ded \eq\,a\,x\,x}\\
			\mcons{\eqsub^*}{%
				\tPi{a}\tPi[\ded \of x\,a]{x,x^*}\tPi[\ded \of y\,a]{y,y^*}\mnl%
				\ded \eq\,a\,x\,y \to\mnl%
				\tPi[\term \to \prop]{F} \ded F\,x \to%
				\ded F\,y%
			}\\
			\mend
		\end{mmtmods*}
		\begin{mmtmods*}
			\mthy{\SSimpFun}{}\\
			\mincl{\SEqual}{}\\
			\mcons{\fun}{\tp\to\tp\to\tp}\\
			\mcons{\lam}{\tPi{a} (\term \to \term) \to \term}\\
			\mcons{\lam^*}{%
				\tPi{a,b}\tPi[\term \to \term]{F}\mnl%
				(\tPi{x} \ded \of x\,a \to \ded \of (F\,x)\,b)\mnl%
				\to\ded \of (\lam\,a\,F)\,(\fun\,a\,b)%
			}\\
			\mcons{\app}{\term\to\term\to\term}\\
			\mcons{\app^*}{%
				\tPi{a,b} \tPi{f} \ded \of f\,(\fun\,a\,b)\to\mnl%
				\tPi{x} \ded \of x\,a \to \ded \of (\app\,f\,x)\,b%
			}\\
			\mend
		\end{mmtmods*}
		\begin{mmtmods*}
			\mthy{\SDepFun}{}\\
			\mincl{\SEqual}{}\\
			\mcons{\dfun}{\tPi{a}(\term \to \tp)\to\tp}\\
			\mcons{\dlam}{\tPi{a} (\term \to \term) \to \term}\\
			\mcons{\dlam^*}{%
				\tPi{a}\tPi[\term\to\tp]{b}\tPi[\term\to\term]{F}\mnl%
				(\tPi{x} \ded \of x\,a \to \ded \of (F\,x)\,(b\,x))\mnl%
				\to \ded \of (\dlam\,a\,b\,F)\,(\dfun\,a\,b)%
			}\\
			\mcons{\dapp}{\term \to \term \to \term}\\
			\mcons{\dapp^*}{%
				\tPi{a,b} \tPi{f} \ded \of f\,(\dfun\,a\,b)\to\mnl%
				\tPi{x} \ded \of x\,a \to \ded \of (\dapp\,f\,x)\,(b\,x)%
			}\\
			\mend
		\end{mmtmods*}
	\end{multicols}
	\begin{multicols*}{2}
		\setlength{\parindent}{0pt}
		\begin{mmtmods*}
			{}\\
			\mthy{\SBeta}{}\\
			\mincl{\SSimpFun}{}\\
			\mcons{\reduce^*}{
				\tPi{a,b}\tPi[\term\to\term]{F}\mnl%
				(\tPi{x} \ded \of x\,a \to \ded \of (F\,x)\,b)\mnl%
				\to\tPi{x}\ded \of x\,a \to\mnl%
				\ded\eq\,b\, (\app\,(\lam\,a\,F)\, x)\,(F\,x)%
			}\\
			\mend
		\end{mmtmods*}
		\begin{mmtmods*}
			\mthy{\SEta}{}\\
			\mincl{\SSimpFun}{}\\
			\mcons{\repr^*}{%
				\tPi{a,b}\tPi[\term]{f} \ded \of f\,(\fun\,a\,b)\mnl%
				\ded \eq\,(\fun\,a\,b)\,f\,(\lam\,a\,\tlam{x}\app\,f\,x)%
			}\\
			\mend
		\end{mmtmods*}
		\begin{mmtmods*}
			{}\\
			\mthy{\SExten}{}\\
			\mincl{\SSimpFun}{}\\
			\mcons{\exten^*}{%
				\tPi{a,b}\tPi[\term]{f} \ded \of f\,(\fun\,a\,b) \to\mnl%
				\tPi[\term]{g} \ded \of g\,(\fun\,a\,b) \to\mnl%
				(\tPi{x} \ded \of x\,a \to \ded \eq\,b\,(\app\,f\,x)\,(\app\,g\,x))\mnl%
				\to \ded \eq\,(\fun\,a\,b)\,f\,g%
			}\\
			\mend
		\end{mmtmods*}
		\begin{mmtmods*}
			\mthy{\SDepBeta}{}\\
			\mincl{\SDepFun}{}\\
			\mcons{\dreduce^*}{%
				\tPi{a,b}\tPi[\term\to\term]{F}\mnl%
				(\tPi{x} \ded \of x\,a \to \ded \of (F\,x)\,(b\,x))\to\mnl%
				\tPi{x} \ded \of x\,a \to\mnl%
				\ded \eq\,(b\,x)\,(\dapp\,(\dlam\,a\,F)\,x)\,(F\,x)}\\
			\mend
		\end{mmtmods*}
	\end{multicols*}
}
	\caption{\label{fig:output-examples}Result of Softening the Theories from Fig.~\ref{fig:input-examples}}
\end{figure}

\subsection{Removal of Unnecessary Parameters}\label{sec:remove-params}

In Sect.~\ref{sec:soften:logrel} we developed a translation from $\HTyped$ to $\STyped$ that maps every constant $c: A$ to a translated constant $c: m(A)$ and a witness $c^*: r(A)\,c$, where we chose $m = \TE$ to be our type erasure morphism and $r = \TP$ our logical relation capturing type preservation.
This translation almost produced the desired formalization $\SProd$ except that some translated constants featured undesired type parameters.
Pre- or post-composing our translation with one that removes selected type parameters is straightforward and presented in the following.
The major problem is identifying these parameters in the first place.
For example, in the library in Fig.~\ref{fig:input-examples} above we can distinguish the following cases:
\begin{compactitem}
	\item removal required, e.g., $\pair: \tPi{a,b}\tm a\to\tm b\to\tm \prod\,a\,b$ should go to
		$\pair: \term\to\term\to\term$
	\item removal optional depending on the intended result, e.g., $\eq: \tPi{a}\tm a\to\tm a\to\prop$ can go to $\eq: \term\to\term\to\prop$ or to $\eq: \tPi{a}\term\to\term\to\prop$;
	     analogously $\lam: \tPi{a,b} (\tm a \to \tm b) \to \tm \fun\,a\,b$ can go to $\lam: (\term \to \term) \to \term$ or to $\lam: \tPi{a} (\term \to \term) \to \term$
	\item removal forbidden, e.g., $\dfun: \tPi[\tp]{a}(\tm a\to\tp)\to\tp$ should go to
		$\dfun: \tPi{a}(\term\to\tp)\to\tp$
\end{compactitem}

\begin{definition}[Unused Positions]
	Consider a constant $c:A$ in a theory $S$ in a diagram $D$.
	After suitably normalizing, $A$ must start with a (possibly empty) sequence of $n$ $\Pi$-bindings, and any definition of $c$ (direct or morphism) must start with the same variable sequence $\lambda$-bound.
	We write $c^1,\ldots,c^n$ for these variable bindings.
	Each occurrence of $c$ in an expression in $D$ is (after suitably $\eta$-expanding if needed) applied to exactly $n$ terms, and we also write $c^i$ for those argument positions.
	
	We call a set $P$ of argument positions of $D$-constants \textbf{unused} if for every $c^i\in P$, the $i$-th bound variable of the type or any definition of $c$ occurs at most as a subexpression of argument positions that are themselves in $P$.
	
	We write $D\sm P$ for the diagram that arises from $P$ by removing for every $c^i\in P$
	\begin{compactitem}
		\item the $i$-th variable binding in the type and all definitions of $c$, e.g., $c:\tPi[A_1]{x_1}\tPi[A_2]{x_2}B$ becomes $c:\tPi[A_1]{x_1}B$ if $i=2$,
		\item the $i$-argument of any application of $c$, e.g., $c\,t_1\,t_2$ becomes $c\,t_1$ if $i=2$.
	\end{compactitem}
\end{definition}

\begin{lemma}[Removing Unused Positions]\label{thm:remove}
	Consider a well-typed diagram $D$ and a set $P$ of argument positions unused in $D$.
	Then $D\sm P$ is also well-typed.
\end{lemma}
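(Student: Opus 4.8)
The plan is to package the surgery in the definition of $D\sm P$ into a single syntactic operator --- write $\rho$ for $(-)\sm P$ --- acting uniformly on expressions, contexts, declarations, theories, morphisms, and diagrams, so that by construction $D\sm P=\rho(D)$. On expressions, $\rho$ is the identity on variables, $\type$, and $\kind$, it recurses through $\tPi$, $\tlam$ and application, and at a maximally applied occurrence $c\,t_1\cdots t_n$ of a $D$-constant $c$ it keeps only the arguments at positions not in $P$ and recurses on those. On a constant declaration $c:A[=t]$ it additionally strips the leading $\Pi$-binders of $A$ and the leading $\lambda$-binders of $t$ at the positions in $P$, and likewise for the defining cases of a morphism. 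This is well-defined: by the standing normalization assumption every occurrence of every constant is fully applied, and the unused condition guarantees that each binder stripped from $A$ or $t$ does not occur free in the $\rho$-image of the body, since all its occurrences sat inside argument positions in $P$, which $\rho$ erases.

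The core is a \emph{removal lemma}: every LF typing derivation $\oftype{S}{\Gamma}{t}{A}$ in $D$ transforms into a derivation $\oftype{\rho(S)}{\Gamma'}{\rho(t)}{\rho(A)}$ in $\rho(D)$, where $\Gamma'$ results from $\rho(\Gamma)$ by deleting the variables that are no longer free after $\rho$ --- exactly those introduced at a removed $\Pi/\lambda$ position. I would prove this by induction on the derivation. The only nonroutine cases are: (i) $\Pi$-formation and $\lambda$-introduction at a removed position, where $\rho$ drops the binder and, since $\rho$ erases all argument positions in $P$ where (by the unused condition) the bound variable could occur, that variable is absent from the $\rho$-images, so admissibility of strengthening in LF permits dropping it from the context; (ii) the application rule at a removed position, where the argument and the $\Pi$-binding it fills are erased in lockstep, so that the type of the application matches its $\rho$-image --- using that $\rho$ commutes with substitution and that a removed argument acts vacuously on $\rho$-images by the unused condition; (iii) conversion, discharged once $\rho$ is known to commute with (hereditary) substitution, $\rho(t[x:=s])=\rho(t)[x:=\rho(s)]$. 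Well-typedness of $\rho(D)=D\sm P$ then follows declaration by declaration along $D$: for a constant $c:A[=t]$ apply the removal lemma to the derivations of $A:\type$ (or $\kind$) and of $t:A$; for a morphism $m:S\to T$ apply it to the derivation of $m(c):\ov m(A)$ for each $S$-constant $c$, additionally using that $\rho$ commutes with the morphism-induced map so that $\rho(m):\rho(S)\to\rho(T)$ is again a morphism. There is no circularity, since the type and definitions of a constant (and the values of a morphism) refer only to strictly earlier material, already shown well-typed in $\rho(D)$.

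The main obstacle is making the commutation of $\rho$ with substitution precise. Naively it fails: substituting a bare constant $c$ for a variable turns the ordinary application $x\,t_1\cdots t_n$ into $c\,t_1\cdots t_n$, which $\rho$ treats specially. The remedy --- and the reason the statement speaks of ``suitably normalizing'' and ``$\eta$-expanding if needed'' --- is to work throughout with canonical ($\eta$-long, $\beta$-normal) forms and hereditary substitution. Then both constants and function-typed variables occur fully applied, the $\eta$-expansion of a constant $c$ is a $\lambda$-abstraction $\tlam{z_1}\cdots\tlam{z_n}c\,z_1\cdots z_n$, and $\rho$ maps it to $\tlam{z_1}\cdots\tlam{z_n}c\,z_{j_1}\cdots z_{j_k}$ over the kept positions $j_1<\cdots<j_k$ --- a harmless term that simply ignores the binders at removed positions and is well-typed against the shortened type of $c$ by the unused condition. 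With this in place one checks by induction on canonical forms that $\rho$ commutes with hereditary substitution, because the argument-dropping at a constant head is absorbed into the reduction of the $\eta$-expanded $\lambda$. A secondary, routine point is that the normalizations used preserve the unused-ness of $P$: $\eta$-expansion only inserts $\lambda$'s and applied occurrences and so creates no new use of a removed variable outside an erased position.
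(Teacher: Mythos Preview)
The paper offloads its proof to an external file (\texttt{proofremove}) that is not part of the provided source, so a line-by-line comparison is not possible here. That said, your argument is correct and is the canonical one for this kind of result: package the removal as a compositional operator $\rho$ on $\eta$-long $\beta$-normal forms, show by induction on typing derivations that $\rho$ preserves judgments (invoking strengthening at dropped binders, the unused hypothesis to guarantee the dropped variables vanish from $\rho$-images, and lockstep erasure of argument and $\Pi$-binder at applications), and discharge conversion via commutation of $\rho$ with hereditary substitution. You have also correctly isolated the one genuinely delicate point --- naive commutation with substitution fails when a bare constant is substituted for a variable head --- and resolved it in the standard way by working with canonical forms throughout, so that a constant only ever appears fully applied and its $\eta$-expansion absorbs the argument-dropping. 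Given that the paper's other proofs in this section are short sketches, its argument for this lemma is almost certainly a terser rendition of the same induction; nothing in your write-up conflicts with the paper's setup (its phrasing ``after suitably normalizing'' and ``$\eta$-expanding if needed'' is exactly the hypothesis you exploit).

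One small presentational point: your removal lemma mixes two steps when it speaks of $\Gamma'$ as ``$\rho(\Gamma)$ with certain variables deleted''. It is cleaner to state the lemma for arbitrary $\Gamma$ with $\rho(\Gamma)$ simply applying $\rho$ to the types, and then invoke strengthening \emph{afterwards}, once per dropped leading binder, when you assemble the declaration-level result. This does not affect correctness.
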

\begin{proof}
Technically, this is proved by induction on the typing derivation of $D$.
But it is easy to see: by construction, (i) the variables bindings in $P$ do not occur in $D\sm P$ so that all types and definitions stay well-typed, and (ii) the type, definitions, and uses of all constants are changed consistently so that they stay well-typed.
The only subtlety is that we need to apply LF's $\eta$-equality to expand not fully applied uses of a constant.
\end{proof}

Note that, in the presence of include declarations or morphisms, the decision whether an argument position may be removed is not local: we must consider the entire diagram to check for all occurrences.
If a theory $T$ includes the theory $S$ and uses a constant $c$ declared in $S$, then an argument position $c^i$ may be unused in $S$ but used in $T$.
Thus, the functor that removes argument positions may have to be undefined on $T$.

Implementing the operation $D\sm P$ is straightforward.
However, much to our surprise and frustration, automatically choosing an appropriate set $P$ turned out to be difficult:
\begin{example}\label{ex:remove}
	The undesired argument positions in $\TE^\HProd$ are exactly the named variables in \HProd that do not occur in their scopes in $\TE^\HProd$ anymore. This includes the positions $\pair^1$ and $\pair^2$, and removing them yields the desired declaration of \pair in \SProd.
	
	However, that does not hold for \HDepFun.
	Here the argument $\dfun^1$ is named in \HDepFun and unused in the declaration $\dfun:\tPi[\tp]{a}(\term\to\tp)\to\tp$ that occurs in $\TE^\HDepFun$.
	However, that is in fact the desired formalization of the soft-typed dependent function type.
	Removing $\dfun^1$ would yield the undesired $\dfun:(\term\to\tp)\to\tp$.
	While we do not mention \mmt's implicit arguments in this paper, note also that $\dfun^1$ is an \emph{implicit} argument in \HDepFun that must become \emph{explicit} in \SDepFun.
\end{example}

This is trickier than it sounds because some argument positions may only be removable if they are removed at the same time; so a fixpoint iteration might be necessary.
Moreover, picking the largest possible $P$ is entirely wrong as it would remove all argument positions.
At the very least, we should only remove \emph{named} argument positions, i.e., those that are bound by a named variable (as opposed to the anonymous variables introduced by parsing e.g., $\prod:\tp\to\tp\to\tp$).
A smarter choice is to remove all named argument positions that become redundant through pushout, e.g., that are named and used in $\HProd$ but unused in $\TE(\HProd)$.
(Note that the pushout $\PO{m}(D)$ has at least the argument positions that $D$ has. It may have more if $m$ maps an atomic type to a function type.)
That is the right choice almost all the time but not always.


After several failed attempts, we have been unable to find a good heuristic for choosing $P$.
For now, we remove all named variables that never occur in their scope anymore, and we allow users to annotate constants like $\keepParam{\dfun}{1}$ where the system should deviate from that heuristic (see Fig.~\ref{fig:input-examples}).
We anticipate finding better solutions after collecting more data in the future.
In the sequel, we write $\POc{m}(D):=\PO{m}(D)\sm P_D$ where $P_D$ is any fixed heuristic.
$\HProd^{\POc{\TE}}$ yields the theory \SProd except that it still lacks the $^*$-ed constants.
The following lemma shows that we can now obtain the morphism $e:\HProd\to\SProd$ from above as $\POc{\TE}_\HProd$:

\begin{lemma}[Removing Arguments Preserves Naturality]\label{thm:remnat}
	Consider a natural functor $O$ and a functor $O'(D):=O(D)\sm P_D$ for some heuristic $P$.
	Then $O'$ is natural as well.
\end{lemma}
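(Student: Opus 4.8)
The plan is to pin down the natural transformation that makes $O'$ a natural functor and then collapse its naturality square onto that of $O$. Recall that, being a natural functor, $O$ comes equipped with theory morphisms $\nu_D : D \to O(D)$ that are natural, i.e.\ $O(\phi)\circ\nu_{D_1} = \nu_{D_2}\circ\phi$ for every $\phi : D_1 \to D_2$ in the source. Since the heuristic produces only \emph{unused} position sets, $P_D$ is unused in $O(D)$, so $O'(D) = O(D)\sm P_D$ is well-typed by Lemma~\ref{thm:remove}. (The non-locality of unusedness is precisely why $O'$ may only be a partial functor; we work throughout on its domain of definition.) Note that $\sm$ deletes only $\Pi$-binders and argument slots, not whole constants, so each constant $c$ occurs both in $O(D)$ (with its full arity $n$) and in $O'(D)$ (with reduced arity $k\le n$).

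First I would construct, for each $D$, a morphism $\delta_D : O(D) \to O'(D)$ that maps a constant $c$ of $O(D)$ whose argument positions $c^1,\dots,c^n$ retain exactly $c^{i_1},\dots,c^{i_k}$ after deleting those in $P_D$ to the expression $\lambda c^1\dots c^n.\, c\, c^{i_1}\cdots c^{i_k}$ over $O'(D)$ (the $c$ on the right being the reduced-arity constant), and maps constants untouched by $P_D$ to themselves. Checking that $\delta_D$ is a well-typed morphism is the one place the \emph{unused} hypothesis is really used: for a removed position $c^j$ the $j$-th bound variable of $c$'s type occurs there only inside argument positions that are themselves removed, and the binder type of a kept position cannot mention a removed variable except inside removed positions; hence the homomorphic extension $\ov{\delta_D}$ turns $c$'s $O(D)$-type into exactly the type of $\lambda c^1\dots c^n.\,c\,c^{i_1}\cdots c^{i_k}$ over $O'(D)$. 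I then set $\nu'_D := \delta_D \circ \nu_D : D \to O'(D)$, which is a morphism as a composite of morphisms (Thm.~\ref{thm:mor}).

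It remains to verify the square $O'(\phi)\circ\nu'_{D_1} = \nu'_{D_2}\circ\phi$. Unfolding $\nu'_{D_i} = \delta_{D_i}\circ\nu_{D_i}$ and using naturality of $\nu$ to rewrite $\nu_{D_2}\circ\phi = O(\phi)\circ\nu_{D_1}$, the square reduces to the single equation
\[ O'(\phi)\circ\delta_{D_1} \;=\; \delta_{D_2}\circ O(\phi) \qquad\text{as morphisms } O(D_1)\to O'(D_2), \]
i.e.\ to naturality of the family $\delta$. With $O'$ on morphisms being the evident contraction of $O$ on morphisms — concretely $O'(\phi)(c)$, for the reduced-arity $c$, is $\lambda(\text{kept vars}).\,\ov{\delta_{D_2}}(O(\phi)(c))\,(\text{all vars})$, in which the removed variables disappear after $\beta$-reduction, and spelling this out carefully is the delicate part — the equation holds: one checks it by a homomorphic induction on $O(D_1)$-syntax, both induced maps being "delete every argument in a removed position and recurse", so they agree on variables, on constants (by the definition of $O'(\phi)$), and hence on all expressions; the only non-routine point is that commuting $\delta$ past an application of a constant with removed positions creates the $\beta$-redex $(\lambda c^1\dots c^n.\,c\,c^{i_1}\cdots c^{i_k})\,\ov{\delta}(t_1)\cdots\ov{\delta}(t_n)$, whose reduct matches the other side — harmless since the translations need only preserve LF judgments, which are closed under $\beta$. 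Granting the displayed equation, the naturality square follows by the two-line chase $O'(\phi)\circ\nu'_{D_1} = O'(\phi)\circ\delta_{D_1}\circ\nu_{D_1} = \delta_{D_2}\circ O(\phi)\circ\nu_{D_1} = \delta_{D_2}\circ\nu_{D_2}\circ\phi = \nu'_{D_2}\circ\phi$.

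I expect the main obstacle to be not any individual calculation but making the removed-position bookkeeping watertight across a whole diagram: one must state precisely what $O'(\phi)$ is, confirm that it is a well-defined morphism exactly on the domain where the heuristic is globally coherent (the non-locality of unusedness reappears here), and be scrupulous that it is $\beta\eta$-equality of expressions, not syntactic equality, that is being tracked. Once $\delta_D$ is seen to be a morphism and the factorization equation $O'(\phi)\circ\delta_{D_1} = \delta_{D_2}\circ O(\phi)$ is established, the remainder is the routine diagram chase above using naturality of $\nu$.
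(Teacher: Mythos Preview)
Your construction is correct and matches the paper's argument: obtain the new natural transformation by post-composing the old one $\nu_D$ with the evident argument-dropping morphisms $\delta_D:O(D)\to O'(D)$, then reduce naturality of $\nu'$ to naturality of $\delta$ via the factorisation $\nu'_D=\delta_D\circ\nu_D$. The paper's proof is considerably terser---it essentially records that removal of unused positions is itself a compositional, judgment-preserving translation out of $O(D)$ and hence composes with the existing natural transformation---whereas you have spelled out the $\beta\eta$-bookkeeping and the cross-diagram coherence conditions on the heuristic that the paper leaves implicit.
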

\begin{proof}
$O$ being natural yields morphisms $O_E:E\to E^O$ from $D$-theories to $O(D)$ theories.
$O(D')$ has the same shape as $O(D)$, and to show that $O'(D)$ is natural, we reuse essentially the same morphisms from $D$-theories to $O'(D)$-theories.
We only have to $\eta$-expand the right-hand sides of all assignments in the morphisms $O_E$ and remove the same argument positions in $P_D$ as well.
\end{proof}

It is straightforward to extend Def.~\ref{def:rel} to all theories extending $S$ in the same way as pushout extends a morphism.
That would yield an include- and definition-preserving natural functor.
However, we omit that here because that functor would work with $\PO{m}$ whereas we want to use $\POc{m}$.
Instead, we make a small adjustment similar how we obtained $\POc{m}$ from $\PO{m}$:

\begin{definition}\label{def:lrop:cleaned}
	Consider a morphism $m:S\to T$ and a term-total logical relation $r$ on $m$.
	Then the functor $\LR{m}{r}$ maps a theory $E$ as follows:
	\begin{compactenum}
		\item We compute $E^m=\POc{m}(E)$.
		\item Due to Lem.~\ref{thm:remnat}, $E^m$ has the same shape as $E$, and there is a morphism $m_E:E\to E^m$.
		We create an initially empty logical relation $r_E$ on $m_E$.
		\item For each $E$-declaration $c:A[=t]$ for which $r_E(A)$ is defined, we add
		\begin{compactenum}
			\item the constant declaration $c^*:r_E(A)\,c[=r_E(t)]$ to $E^m$ 
			\item the case $r(c)=c^*$ to $r_E$.
		\end{compactenum}
	\end{compactenum}
\end{definition}

\begin{theorem}\label{thm:lrop}
	In the situation of Def.~\ref{def:lrop:cleaned}, the operator $\LR{m}{r}$ is a natural functor. 
	And every $r_E$ is a term-total logical relation on $m_E$.
\end{theorem}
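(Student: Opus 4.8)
The plan is to prove both assertions simultaneously, the point being that the typing obligation which makes a constant $c^*$ legal to add in Def.~\ref{def:lrop:cleaned} is exactly the logical-relation condition for the case $r(c)=c^*$; so ``$\LR{m}{r}(E)$ is a well-formed theory'' and ``$r_E$ is a term-total logical relation on $m_E$'' come out together. The ingredients are Thm.~\ref{thm:rel} (the fundamental lemma for partial logical relations), Lem.~\ref{thm:remnat} --- applied to the plain pushout functor $\PO{m}$, which makes $\POc{m}$ itself a natural functor --- and Lem.~\ref{thm:remove}.

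First I would treat the action on objects. Fix an $S$-extension $E$ and induct on its declarations, handling an included $S$-extension $E'$ by recursion so that $\LR{m}{r}(E)$ includes $\LR{m}{r}(E')$, $r_E$ extends $r_{E'}$, and on the common $S$-part $\LR{m}{r}$ agrees with $\POc{m}$ and $r_E$ with the given $r$; this also shows $\LR{m}{r}$ is include-preserving. For a declaration $c:A[=t]$, note that $A$ and $t$ mention only earlier constants, so $r_E(A)$ and $r_E(t)$ are already determined by the prefix, which by the induction hypothesis is a term-total logical relation on the corresponding prefix of $m_E$. If $r_E(A)$ is undefined there is nothing to add and term-totality is vacuous at $c$. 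Otherwise $r_E(A):\ov{m_E}(A)\to\type$, and since $m_E\colon E\to\LR{m}{r}(E)$ is a morphism --- its existence, and the fact that it may now $\eta$-expand $c$ to drop the argument positions cleaned by $\sm P_E$, are exactly what Lem.~\ref{thm:remnat} supplies --- we have $\oftype{\LR{m}{r}(E)}{}{m_E(c)}{\ov{m_E}(A)}$, so $r_E(A)\,m_E(c)$ (written $r_E(A)\,c$ in Def.~\ref{def:lrop:cleaned}) is a legal constant type, which $\beta$-reduces to the shape expected for $c^*$. When $c$ has a definiens, $\oftype{E}{}{t}{A}$ together with Thm.~\ref{thm:rel} give $\oftype{\LR{m}{r}(E)}{}{r_E(t)}{r_E(A)\,\ov{m_E}(t)}$, and since $\ov{m_E}(c)$ is definitionally equal to $\ov{m_E}(t)$ this is the type of $c^*$ up to unfolding; hence $c^*$ type-checks. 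Adding the case $r(c)=c^*$ keeps $r_E$ term-total (it is added exactly when the type-case is defined) and, by that computation, a logical relation; by Lem.~\ref{thm:remove} the cleaning $\sm P_E$ removes only typing-irrelevant positions, so nothing else is disturbed.

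Next I would extend $\LR{m}{r}$ to morphisms. For $f\colon E_1\to E_2$ over $S$, let $\LR{m}{r}(f)$ agree with $\POc{m}(f)$ on the pushout part and send each starred constant $c^*$ to $\ov{r_{E_2}}(f(c))$. The workhorse is a compositionality lemma: for every $E_1$-expression $X$, if $\ov{r_{E_1}}(X)$ is defined then so is $\ov{r_{E_2}}(\ov{f}(X))$, and it equals $\ov{\LR{m}{r}(f)}(\ov{r_{E_1}}(X))$; I would prove this by induction on $X$ along Fig.~\ref{fig:rel}, in tandem with the evident compositionality of $\ov{m}$. Granting the lemma, $\ov{r_{E_2}}(\ov{f}(A))$ is defined because $\ov{r_{E_1}}(A)$ is (where $A$ is the type of $c$), hence so is $\ov{r_{E_2}}(f(c))$ by term-totality of $\ov{r_{E_2}}$, and Thm.~\ref{thm:rel} applied to $r_{E_2}$ and the judgment $\oftype{E_2}{}{f(c)}{\ov{f}(A)}$ shows that the starred part of $\LR{m}{r}(f)$ has the type demanded by Def.~\ref{def:mor}; thus $\LR{m}{r}(f)$ is a $T$-morphism, and it is the identity on $T$ because $f$ is the identity on $S$. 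Preservation of identities and composition then reduces to the same properties of $\POc{m}$ plus the compositionality lemma, and naturality, with $\eta_E:=m_E$, reduces to the naturality square of $\POc{m}$, valid by Lem.~\ref{thm:remnat}.

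The step I expect to be the main obstacle is the compositionality lemma, where partiality genuinely bites: one must check that the positions at which the relation is undefined --- hence where binders and arguments get dropped --- line up exactly whether one applies the morphism first and then the relation or proceeds in the constructed order, which needs care with the two partiality side-conditions in the binder and application clauses of Fig.~\ref{fig:rel} and with tracking the ``starred'' status of bound variables through $\ov{f}$ (which fortunately sends variables to variables). A secondary subtlety, already resolved above, is the interplay between the cleaning $\sm P_E$ and the starred constants: the natural transformation $m_E$ $\eta$-expands the cleaned constants, so that $r_E(A)\,m_E(c)$ is well-typed and $\beta$-reduces to exactly the type one wants for $c^*$. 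Once the compositionality lemma is in hand, the functor laws, naturality, and include-preservation are routine consequences of the corresponding facts about $\POc{m}$ and of Thm.~\ref{thm:rel}.
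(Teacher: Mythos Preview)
The paper delegates this proof to an external file (\texttt{prooflrop}) that is not included in the source you were given, so a line-by-line comparison is impossible. Judging by the infrastructure the paper sets up --- Thm.~\ref{thm:rel} as the fundamental lemma, Lem.~\ref{thm:remove} for the well-typedness of cleaning, and Lem.~\ref{thm:remnat} to recover naturality of $\POc{m}$ --- your decomposition is exactly the one the surrounding text is steering towards, and your argument is sound.

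Two small remarks. First, your reading of ``$r_E(A)\,c$'' in Def.~\ref{def:lrop:cleaned} is right: the $c$ there already lives in $E^m$ (pushout keeps constant names, cf.\ Fig.~\ref{fig:mor}), so it \emph{is} $m_E(c)$ on the nose rather than merely up to $\eta$-expansion of dropped positions; the $\eta$-expansion enters only when $c$ occurs \emph{applied} inside a larger term. Second, the compositionality lemma you single out --- that $\ov{r_{E_2}}\circ\ov{f}$ and $\ov{\LR{m}{r}(f)}\circ\ov{r_{E_1}}$ agree, including on definedness --- is indeed the load-bearing step for functoriality on morphisms, and the partiality bookkeeping you flag (matching dropped binders/arguments across the two orders) is precisely where the term-totality hypothesis is consumed. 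That is the right place to spend the effort; the rest is, as you say, routine given $\POc{m}$'s functoriality and naturality.
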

\begin{proof}
We already know that $\POc{-}$ has the desired properties.
Moreover, adding well-typed declarations to $\POc{m}$ does not affect the naturality (because adding declaration to the codomain never affects the well-typedness of a morphism).
So for the first claim, we only have to prove that our additions are well-typed.

We prove that and the fact that $r_E$ is a logical relation jointly by induction on the derivation of the well-typedness of $D$:
We appeal to Thm.~\ref{thm:rel} to show that the added constant declarations are well-typed.
And the cases $r(c)=c^*$ satisfy the typing requirements of logical relations by construction.
\end{proof}

Now the functor $\LR{\TE}{\TP}$ generates for every hard-typed feature $F$
\begin{compactitem}
	\item the corresponding soft-typed feature $F'$
	\item the type-erasure translation $\TE_F:F\to F'$ as a compositional/homomorphic mapping,
	\item the type preservation proof $\TP_F$ for the type erasure as a logical relation on $\TE_F$.
\end{compactitem}
In particular, we have $\SProd=\LR{\TE}{\TP}(\HProd)$.

\subsection{Translating Proof Rules Correctly}\label{sec:proof-rules}

We omitted the reduction rules in our introductory example \HProd.
This was because $\LR{\TE}{\TP}$ is still not the right operator.
To see what goes wrong, assume we leave $\TP(\dedN)$ undefined, and consider the type of the \reduce rule from \HBeta:

\begin{center}
	\begin{tabular}{ll}
		\HBeta &  $\tPi{a,b}\tPi[\tm a\to\tm b]{F}\tPi{x}\ded\eq\,b\, (\app\,a\,b\,(\lam\,a\,b\,F)\, x)\,(F\,x)$\\
		$\HBeta^{\LR{\TE}{\TP}}$ (generated) & $\tPi{a,b}\tPi[\term\to\term]{F}\tPi{x}$ \\ & \tb $\ded\eq\,b\, (\app\,(\lam\,a\,F)\, x)\,(F\,x)$ \\
		\SBeta (needed) & $\tPi{a,b}\tPi[\term\to\term]{F}\tPi[\tPi{a}\ded\of x\,a\to\ded\of (F\,x)\,b]{F^*}\tPi{x}\tPi[\ded\of x\,a]{x^*}$ \\ & \tb $\ded\eq\,b\, (\app\,(\lam\,a\,F)\, x)\,(F\,x)$
	\end{tabular}
\end{center}

The rule generated by $\LR{\TE}{\TP}(\HProd)$ is well-typed but not sound.
In general, the softening operator must insert $^*$-ed assumptions for all variables akin to how Def.~\ref{def:lrop:cleaned} inserts them for constants.
But it must only do so for proof rules and not for, e.g., \fun, \lam, and \app.

We can achieve that by generalizing to partial logical relations on \emph{partial} morphisms.
Intuitively, we define $\LRc{m}{r}$ for partial $m$ and $r$ in the same way as $\LR{m}{r}$, again dropping all variable and constant declarations for whose type the translation is partial.

First we refine \TE and \TP (from Fig.~\ref{fig:te-pushout} and Page~\pageref{def:tp}, respectively) as follows:
\begin{compactitem}
	\item We leave $\TE(\dedN)$ undefined, i.e., our morphisms do not translate proofs. That is to be expected because we know that \TE cannot be extended to a morphism that also translates proofs \cite{rabe:lax:14}.
	\item We put $\TP(\dedN) = \tlam[\prop]{p}\ded p$ and thus $\TP(\ded P)=\ded \TE(P)$ for all $P$.
	This trick has the effect that $\reduce^*$ is generated as well and has the needed type (whereas the generation of \reduce is suppressed).
\end{compactitem}
Then we finally define $\Soften=\LRc{\TE}{\TP}$.
For every proof rule $c$ over \HTyped, it
\begin{compactitem}
	\item drops the declaration of $c$,
	\item generates the declaration of $c^*$, which now has the needed type.
\end{compactitem}

\Soften is still include- and definition-preserving but is no longer natural.
We conjecture that it is lax-natural and captures proof translations as lax morphisms in the sense of \cite{rabe:lax:14}.

%

\section{Translating Libraries}\label{sec:diagrams}
In the examples so far we have applied \Soften on theories extending \HTyped one at a time.
We now extend it to a translation on whole structured diagrams of theories and morphisms, mapping whole libraries of hard-typed features at once.

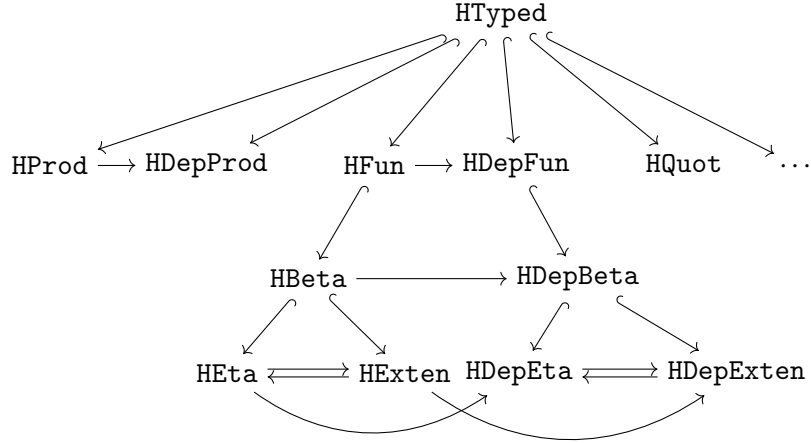
\begin{figure}
	\[\begin{tikzpicture}[right hook->,x=2cm,y=2cm]
		\node at (3,5) (HTyped)   {\HTyped};
		\node at (0,4) (HSimpProd)    {\HSimpProd};
		\node[right=0.5cm of HSimpProd] (HDepProd) {\HDepProd};
		\node[right=0.75cm of HDepProd] (HSimpFun) {\HSimpFun};
		\node[right=0.5cm of HSimpFun] (HDepFun)  {\HDepFun};
		\node[right=0.75cm of HDepFun] (HSubtyping) {\HQuot};
		\node[right=0.5cm of HSubtyping] (HQuotiening) {\ldots};
		\node[below left=1cm and -0.3cm of HSimpFun] (HBeta) {\HBeta};
		\node[below left =0.75cm and -0.1cm of HBeta] (HEta) {\HEta};
		\node[below right=0.75cm and -0.1cm of HBeta] (HExten) {\HExten};
		
		\node[right=2cm of HBeta] (HDepBeta) {\HDepBeta};
		\node[right=2.5cm of HEta] (HDepEta) {\HDepEta};
		\node[right=1cm of HDepEta] (HDepExten) {\HDepExten};
		
		\draw (HTyped) -- (HSimpProd);
		\draw (HTyped) -- (HDepProd);
		\draw (HTyped) -- (HSimpFun);
		\draw (HTyped) -- (HDepFun);
		\draw (HTyped) -- (HSubtyping);
		\draw (HTyped) -- (HQuotiening);
		\draw (HSimpFun) -- (HBeta);
		\draw (HBeta) -- (HEta);
		\draw (HBeta) -- (HExten);
		\draw (HDepFun) -- (HDepBeta);
		\draw (HDepBeta) -- (HDepEta);
		\draw (HDepBeta) -- (HDepExten);
		
		\draw[->] (HSimpProd) -- (HDepProd);
		\draw[->] (HSimpFun) -- (HDepFun);
		\draw[->,transform canvas={yshift=0.3ex}] (HEta) -- (HExten);
		\draw[->,transform canvas={yshift=-0.3ex}] (HExten) -- (HEta);
		
		\draw[->,transform canvas={yshift=0.3ex}] (HDepEta) -- (HDepExten);
		\draw[->,transform canvas={yshift=-0.3ex}] (HDepExten) -- (HDepEta);
		
		\draw[->] (HBeta) -- (HDepBeta);
		\draw[->,bend angle=35,bend right] (HEta) to (HDepEta);
		\draw[->,bend angle=35,bend right] (HExten) to (HDepExten);
	\end{tikzpicture}\]
	\caption{\label{fig:diagrams:overview}Diagram of hard-typed features}
\end{figure}

Before spelling out the definition, we show an exemplary library of hard-typed features in Fig.~\ref{fig:diagrams:overview}.
Here, we extend the collection of theories shown so far, the most notable extensions being several morphisms and the theory \HQuot formalizing hard-typed quotient types.
Here, the morphism $\HSimpProd \to \HDepProd$ realizes simple product types as a special case of dependent product types.
Analogously, all of the morphisms $\cn{H}\{\cn{Fun},\cn{Beta},\cn{Eta},\cn{Exten}\} \to \cn{HDep}\{\cn{Fun},\cn{Beta},\cn{Eta},\cn{Exten}\}$
realize the simply-typed feature as a special case of the corresponding dependently-typed feature.
And the anti-parallel morphism pairs $\HEta \xtofrom[]{} \HExten$ and $\HDepEta \xtofrom[]{} \HDepExten$ capture that $\eta$ and extensionality are equivalent in the presence of $\beta$ reduction.
Our goal is to soften this library in a way that preserves the modular structure.

\begin{definition}[\Soften on Diagrams]\label{def:soften:diagrams}
	On the category of theories and partial morphisms, we define \Soften as the partial functor translating diagrams $D$ over \HTyped to diagrams $D'$ over \STyped as follows:
	\begin{compactitem}
		\item every theory/morphism with name $X$ in $D$ yields a theory/morphism $X^\Soften$ in $D'$
		\item every $\incl{\HTyped}$ is replaced by $\incl{\STyped}$; and every $\incl{X}$ by $\incl{X^\Soften}$		
		\item every declaration $c: A\,[=t]$ in a theory $S$ yields those on the left below, and every assignment $c := t$ in a morphism yields those on the right below (whenever the involved translations are defined)
		\begin{align*}
			&c   &&\hspace*{-6.5em}: m_S(A)\,[=m_S(t)]\quad&&c&&\hspace*{-6.5em}:= m_S(t)\\
			&c^* &&\hspace*{-6.5em}: \TP^S(A)\,c\,[=\TP^S(t)]&&c^* &&\hspace*{-6.5em}:= \TP^S(t)
		\end{align*}
		where $m_S: S \to \POc{\TE^S}(S)$ is the morphism from Lem.~\ref{thm:remnat} and $\TE^S$ and $\TP^S$ are given below.
	\end{compactitem}
	
	We define $\TE$ as the partial morphism and $\TP$ as the partial logical relation on $\TE$ by
		\[\eqns{
			\TE(\prop) & \prop\\
			\TE(\tp)   & \tp\\
			\TE(\tmN)   & \tlam[\tp]{a} \term\\[.17cm]
			\TP(\dedN)    & \tlam[\prop]{p} \ded p\\
			\TP(\tmN)     & \tlam[\tp]{a} \tlam[\term]{x} \of\,x\,a
		}\]

   Then we additionally build the following components of $D'$:	
  \begin{compactitem}
	  \item every theory $X$ yields a partial morphism $\TE^T: T \to T^\Soften$ and a partial logical relation $\TP^T$ over $\TE^T$
    \item every $\incl{\HTyped}$ in a theory $S$ is replaced by $\incl{\TE}$ in $\TE^S$, and $\TP^S$ is made to extend $\TP$; and every other $\incl{T}$ in theories $S$ yields $\incl{\TE^S}$, and the definition of $\TP^S$ extends $\TP^T$
    \item every declaration $c: A\,[=t]$ in a theory $S$ yields $c := c$ in $\TE^S$ and $\TP^S(c) := c^*$ (whichever are defined)
 \end{compactitem}
\end{definition}

\begin{theorem}[Structure Preservation]
 Consider the category of LF theories and partial type- and sub\-sti\-tu\-tion-preserving expression translations as morphisms.
 Then \Soften is functorial and preserves the structure of includes and definitions.
 It is natural via the morphisms $\TE^S$ and the relation $\TP^S$.
\end{theorem}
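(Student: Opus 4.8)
The plan is to obtain the three assertions --- functoriality, preservation of includes and definitions, and naturality via $\TE^S$ and $\TP^S$ --- by assembling the per-theory results proved so far and lifting them from a single $S$-extension to an entire diagram $D$ over \HTyped. The heart of the argument is that every declaration produced by Def.~\ref{def:soften:diagrams} is well-typed. For a declaration $c:A\,[=t]$ of a theory $S$, the part $c:m_S(A)\,[=m_S(t)]$ is well-typed by Thm.~\ref{thm:mor} applied to the natural morphism $m_S:S\to\POc{\TE^S}(S)$ of Lem.~\ref{thm:remnat}, which is itself well-defined by Lem.~\ref{thm:remove} since the positions removed by the heuristic $P_D$ are unused. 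The part $c^*:\TP^S(A)\,c\,[=\TP^S(t)]$ is well-typed by the typing property of Thm.~\ref{thm:rel}, noting that $c$ carries the definiens $m_S(t)$, so that $\TP^S(A)\,c$ is definitionally equal to $\TP^S(A)\,m_S(t)$ --- the form in which that theorem states its conclusion. When $\TP^S$ is undefined on the type $A$ the declaration of $c^*$ is suppressed, and the coherence clause of Thm.~\ref{thm:rel} --- that the removals fit together --- guarantees this does not break anything: the map induced by $\TP^S$ is defined exactly on those expressions all of whose constants retain a $^*$-companion. The same case analysis covers the morphism assignments $c:=m_S(t)$ and $c^*:=\TP^S(t)$.

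Preservation of structure is then mostly bookkeeping. Include-preservation is built into Def.~\ref{def:soften:diagrams} ($\incl{\HTyped}\mapsto\incl{\STyped}$, $\incl{X}\mapsto\incl{X^\Soften}$, and analogously inside each $\TE^S$ and $\TP^S$), so the only thing to verify is that the translated declarations of an included theory $S$ form a prefix of those of an including theory $T$ --- the ``same shape'' property recorded in the second item of Def.~\ref{def:lrop:cleaned}. This holds because $P_D$ is fixed uniformly for the whole diagram and Lem.~\ref{thm:remnat} shows that argument removal is compatible with the diagram's morphisms, hence in particular with its includes. Definition-preservation is immediate from the displayed shape of the generated declarations.

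For functoriality on morphisms, a partial type- and substitution-preserving $\mu:S\to S'$ yields $\mu^\Soften$ sending $c$ to the $S'$-side translation of $\mu(c)$ and $c^*$ to $\TP^{S'}(\mu(c))$; preservation of identities is immediate, and preservation of composition holds because $m_{(-)}$ and the maps induced by $\TP^{(-)}$ are given by the homomorphic recursion of Fig.~\ref{fig:rel} while $\mu$ commutes with substitution, so both sides of the composition law normalize to the same expression. Naturality is then obtained exactly as in Thm.~\ref{thm:lrop}: the assignments $c:=c$ (on non-$\dedN$ constants) and $\TP^S(c):=c^*$ make $\TE^S:S\to S^\Soften$ a partial morphism in the sense of Def.~\ref{def:mor} and $\TP^S$ a partial logical relation on it in the sense of Def.~\ref{def:rel} --- the required typing inequalities being precisely the well-typedness facts of the first paragraph --- and the naturality squares $\TE^{S'}\circ\mu=\mu^\Soften\circ\TE^S$, with their $\TP$-analogue, commute because both composites are once more given by that same recursion on syntax.

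I expect the main obstacle to be the interaction of the non-locality of parameter removal with the genuine partiality of $\TE$ on $\dedN$. For the former, whenever $T$ includes $S$ one must show that $P_D$ acts consistently on the argument positions of an $S$-constant in $S$ and in $T$; Lem.~\ref{thm:remnat} is built for this, but invoking it requires first observing that the restriction of $P_D$ to the sub-diagram generated by $S$ is again a valid heuristic there. For the latter, one must extend Thm.~\ref{thm:rel} (stated there for total $m$) to partial $m$ by adding, to its induction, the cases in which $m(t)$ is undefined --- so that for a proof rule $c:\ded P$, where $m_S(c)$ is undefined, one still obtains $c^*:\ded\,\TE^S(P)$ --- and then check from the case distinction of Fig.~\ref{fig:rel} that dropping a proof constant $c$ while keeping $c^*$ never leaves a dangling reference: every occurrence of such a $c$ sits under some $\ded(\cdot)$, which the translation handles via $\TP$ (using $\TP(\ded P)=\ded\,\TE(P)$) rather than via $\TE$, so every surviving $^*$-declaration and morphism assignment mentions only surviving constants. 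Once these two points are settled, the remaining verifications are routine instantiations of the lemmas already proved.
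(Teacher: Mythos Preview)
Your argument is essentially correct, but it takes a very different route from the paper: the paper's entire proof is the single sentence ``This holds by construction.''  In other words, the authors regard Def.~\ref{def:soften:diagrams} as having been set up precisely so that each clause of the theorem is immediate, and they do not unpack which earlier result is responsible for which clause.

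What you do instead is supply that unpacking: you trace well-typedness of the $c$-declarations to Thm.~\ref{thm:mor} via Lem.~\ref{thm:remnat} and Lem.~\ref{thm:remove}, well-typedness of the $c^*$-declarations to Thm.~\ref{thm:rel}, include- and definition-preservation to the explicit clauses of Def.~\ref{def:soften:diagrams}, and naturality to the pattern of Thm.~\ref{thm:lrop}.  This is more informative than the paper's proof and, importantly, it surfaces two genuine technical debts that the one-liner hides: (i) that Thm.~\ref{thm:mor} and Thm.~\ref{thm:rel} are stated for \emph{total} $m$, whereas here $\TE$ is undefined on $\dedN$, so one must redo their inductions with the extra ``drop if undefined'' cases; and (ii) that the heuristic $P_D$ is global to the diagram, so one must check it restricts compatibly along includes before invoking Lem.~\ref{thm:remnat}.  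You flag both correctly.  The paper's approach buys brevity and signals that nothing conceptually new happens beyond Def.~\ref{def:soften:diagrams}; your approach buys an honest account of where the remaining work lies, at the cost of a longer argument whose final paragraph is really a to-do list rather than a completed proof.
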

\begin{proof}This holds by construction.\end{proof}



This finally yields the intended soft-typed formulation of $\SSimpFun:=\HSimpFun^\Soften$ and $\SDepFun:=\HDepFun^\Soften$.
As an example, we give the morphism $\HSFtoDF: \HSimpFun \to \HDepFun$ and its translation below.

{
\newcommand{\mmorDense}[3]{\multicolumn{3}{l}{\lfkw{morph}\;#1\;:#2\kern-0.08em \to \kern-0.08em #3=}}
\setlength{\columnsep}{-5.7em}
\begin{multicols}{2}
\noindent
\hspace{0.05em}
\begin{mmtmods*}
	\mmorDense{\HSFtoDF}{\HSimpFun}{\HDepFun}\\
	\mincl{\HEqual}{}\\
	\mconsd{\fun}{}{\tlam{a,b}\dfun\,a\,\tlam{x}b}\\
	\mconsd{\lam}{}{\tlam{a,b,F}\dlam\,a\,(\tlam{x}b)\,F}\\
	\mconsd{\app}{}{\tlam{a,b,f,x}\dapp\,a\,(\tlam{x}b)\,f\,x}\\
	\mend
\end{mmtmods*}
\begin{mmtmods*}
	\mmorDense{\HSFtoDF^\Soften}{\HSimpFun^\Soften}{\HDepFun^\Soften}\\
	\mincl{\HEqual^\Soften}{}\\
	\mconsd{\fun}{}{\tlam{a,b}\dfun\,a\,\tlam{x}b}\\
	\mconsd{\lam}{}{\tlam{a,F}\dlam\,a\,F}\\
	\mconsd{\lam^*}{}{\tlam{a,b,F,F^*}\dlam^*\,a\,(\tlam{x} b)\,F\,F^*}\\
	\mconsd{\app}{}{\tlam{f,x}\dapp\,f\,x}\\
	\mconsd{\app^*}{}{\tlam{a,b,f,f^*,x,x^*}\dapp\,a\,(\tlam{x}b)\,f\,f^*\,x\,x^*}\\
	\mend
\end{mmtmods*}
\end{multicols}
}

If we generalize the meta-theory of \cite{rabe:lax:14} to partial morphisms/relations and work in a variant of LF that adds product types, we could pair up $\TE^S$ and $\TP^S$ into a single expression translation that maps every term to the pair of its type erasure and its type preservation proof.


\section{Implementation}\label{sec:impl}
The formalizations developed in this paper including the act of softening are available online as part of the LATIN2 library.\footnote{\url{https://gl.mathhub.info/MMT/LATIN2/-/tree/devel/source/casestudies/2021-softening}}
Our implementation adds a component to \mmt that applies logical relation-based translations to entire diagrams of theories and morphisms.
Then softening arises as one special case of that construction.
While all translations are implemented in the underlying programming language of \mmt and thus part of the trusted code base, our general and systematic approach minimizes the amount of new code needed for any given instance such as softening.
That makes it much easier to review their correctness.
In any case, the generated diagrams can easily be double-checked by the original logical framework.

We are still experimenting with how to trigger these translations.
It is non-obvious if softening should be triggered by a new kind of declaration in the logical framework, a library-level script that lives outsides the logical framework, or a feature of the implementation that transparently builds the softened theory whenever the user refers to it.
As a prototype, we have chosen the first of these approaches.

In our implementation it proved advantageous to not have separate syntax for morphisms and logical relations.
Indeed, both are maps of names to expressions that extend to compositional translations of expressions to expressions, differing only in the inductive extension.
Instead, we found a way to represent every logical relation as a morphism, thus obviating the need to introduce additional syntax for relations.
The trick is to implement a special include-preserving functor that generates a theory $I$ that specifies exactly the typing requirements for the cases in a logical relation $r$, and then to represent $r$ as a morphism out of $I$.
We can even use this trick to represent multiple translations at once in a single morphism.
For example, in our implementation we jointly represent $\TE$ and $\TP$ from Def.~\ref{def:soften:diagrams}
as a single morphism \lstinline|TypePres|,
which in implementation-near syntax reads as follows.

\begin{lstlisting}
view TypePres : HTyped_comptrans -> STyped =
  prop/TE = prop
  tp/TE   = tp
  tm/TE   = [x] term

  tm/TP   = [A,t] $\vdash$ t $\altcolon$ A
  ded/TP  = ded
\end{lstlisting}
Here, \verb|HTyped_comptrans| refers to a suitable kind interface theory for combined translations on \verb|HTyped|.

Softening now emerges as the composition of multiple operators in our implementation,
which for the sake of conciseness were combined in one big operator in this paper.
Assume we wanted to soften a library of hard-typed features given as a diagram \verb|HLibrary| (e.g., the one from Fig.~\ref{fig:diagrams:overview}).
First, we compute the diagram \verb|HLibrary_comptrans|
of corresponding interface theories.
This diagram has the same shape as \verb|HLibrary| and is a tree rooted in \verb|HTyped_comptrans|.
Then, we compute the pushout of the resulting diagram along \verb|TypePres|.
The steps so far are effectively equivalent to applying the operator from Def.~\ref{def:lrop} accounted for with correct translation of proof rules.
Finally, it remains to apply the operator that drops unnecessary parameters (according to the heuristic outlined in Sec.~\ref{sec:remove-params}).
Below, we show how the last two steps look like in our.
\begin{lstlisting}
diagram SLibrary :=
  DROP_PARAMS STyped (PUSHOUT TypePres HLibrary_comptrans)
\end{lstlisting}

\section{Conclusion}\label{sec:conc}

We have given a translation of hard-typed (intrinsic) to soft-typed (extrinsic) formalizations of type theory.
Even though the existence of such translation is known, it had previously proved difficult to derive it from meta-theoretic principles in such a way that it can be studied and implemented easily.
Our key insight was that the associated type preservation proof can be cast as a logical relation, which allowed us to derive the translation from the requirement that the logical relation proof succeeds.

Our translation preserves modularity, which makes it suitable for translating modular libraries of formalizations of various type theories.
That enhances the quality and coverage of the library while reducing the maintenance effort.
Our implementation will serve as a key component in scaling up our modular formalizations of type theories in our LATIN2 library.

We expect our methodology of functors on diagrams of LF theories to extend to other important translations such as adding polymorphism or universes.

\bibliography{../../macros/bib/rabe,../../macros/bib/systems,../../macros/bib/historical,../../macros/bib/pub_rabe,../../macros/bib/institutions}


\end{document}